%-----------------------------------------------------------------------
% Beginning of eca-template.tex
%-----------------------------------------------------------------------
%
%     This is a template file for CRM Extended Conference Abstracts for LaTeX.
%
%     Templates for various common text, math and figure elements are
%     given following the \makepapertitle line.
%
%%%%%%%%%%%%%%%%%%%%%%%%%%%%%%%%%%%%%%%%%%%%%%%%%%%%%%%%%%%%%%%%%%%%%%%%

%     Remove any commented or uncommented macros you do not use.

\documentclass{crm-eca}

%     If you need symbols beyond the basic set, uncomment this command.
%\usepackage{amssymb}

%     If your article includes graphics, uncomment this command.
%\usepackage{graphicx}

%     If the article includes commutative diagrams, uncomment this command.
%\usepackage[cmtip,all]{xy}

\newcommand{\nablab}{\bar\nabla}
\newcommand{\cc}{\boldsymbol{c}}
\newcommand{\ce}{\mathcal{E}}
\newcommand{\ct}{\mathcal{T}}

\newcommand{\B}{\mathcal B}
\newcommand{\si}{\sigma}
\newcommand{\hD}{\widehat{D}}
\newcommand{\cB}{{\mathcal B}}
\newcommand{\bg}{\mbox{\boldmath{$ g$}}}
\newcommand{\II}{{\bf\rm I\hspace{-.2mm}I}}
\newcommand{\IIo}{\mathring{{\bf\rm I\hspace{-.2mm} I}}{\hspace{.2mm}}}

\newcommand{\J}{{\mbox{\sf J}}}
\newcommand{\Z}{{\mathcal Z}}

%%%%%%%%%%%%% Temporary %%%%%%%%%%%%%%
%\usepackage{showkeys}

\def\sideremark#1{\ifvmode\leavevmode\fi\vadjust{\vbox to0pt{\vss% the remark
 \hbox to 0pt{\hskip\hsize\hskip1em%                          will appear only
 \vbox{\hsize1cm\tiny\raggedright\pretolerance10000%          on the side
 \noindent #1\hfill}\hss}\vbox to8pt{\vfil}\vss}}}%
                                                   %          in 3cm
                        
                                                   %          wide box
                                                   %          

%%%%%%%%%%%%%%%%%%%%%%%%%%%%%%%%%%%%%%%%%

\usepackage[backref,pagebackref,linkcolor=blue]{hyperref}
\renewcommand{\O}{{\mathcal O}}
\usepackage{geometric-analysis-fix}  
% Provides workshop title and series information. Do not comment.

\newtheorem{theorem}{Theorem}
\newtheorem{lemma}[theorem]{Lemma}

\newtheorem{proposition}[theorem]{Proposition}

\theoremstyle{definition}

\newtheorem{problem}[theorem]{Problem}

\theoremstyle{remark}
\newtheorem{remark}[theorem]{Remark}

\begin{document}

% \papertitle[short text for running head]{full title}
% Use \\ if the title consists of multiple lines
\papertitle[Conformal invariants via a Yamabe problem]{Generalising the Willmore equation: submanifold conformal invariants from a boundary Yamabe problem}

%    Authors information

%    Author one information
\paperauthor{A. Rod Gover}
\paperaddress{Department of Mathematics\\
  The University of Auckland\\
  Private Bag 92019\\
  Auckland 1142\\
  New Zealand,  and\\
  Mathematical Sciences Institute, Australian National University, ACT
  0200, Australia} % Use \newline if the address consists of multiple lines
\paperemail{r.gover@auckland.ac.nz}

%    Author two information
\paperauthor{Andrew Waldron}
\paperaddress{Department of Mathematics\\
  University of California\\
  Davis, CA95616, USA} % Use \newline if the address consists of multiple lines
\paperemail{wally@math.ucdavis.edu}

%    Add more authors as above if necessary

\paperthanks{ARG supported by Marsden grant 13-UOA-018}

\makepapertitle

\begin{center}
\textbf{Abstract}
\end{center}
\begin{quote}
The Willmore energy, alias bending energy or rigid string action, and its variation---the {\it Willmore invariant}---are important surface conformal invariants with applications ranging from cell membranes to the entanglement entropy in quantum gravity. In work of Andersson, Chru\'sciel, and Friedrich, the same  invariant arises as the obstruction to smooth boundary asymptotics to the Yamabe problem of finding a metric in a conformal class with constant scalar curvature. We use
conformal geometry tools to describe and compute the asymptotics
of the Yamabe problem on a conformally compact manifold and thus produce
higher order hypersurface conformal invariants  that generalise the Willmore invariant. 
We give a holographic formula for these as well as variational principles for the lowest lying examples.
\end{quote}

\section{Introduction}

While much is known about the invariants of conformal manifolds, the same
cannot be said for the invariants of submanifolds in conformal
geometries. Codimension-1 embedded submanifolds (or {\em
  hypersurfaces}) are important  for applications
in geometric analysis and physics.
%, and are the subject of our study here. 
An extremely interesting example is  the Willmore equation
\begin{equation}\label{Wore}
\bar{\Delta} H +2 H(H^2-K)=0,
\end{equation}
for an embedded surface $\Sigma$ in Euclidean 3-space
$\mathbb{E}^3$. Here $H$ and $K$ are, respectively, the mean and
Gau\ss\ curvatures, while $\bar\Delta$ is the Laplacian induced on
$\Sigma$. We shall term the left hand side of this equation the {\it
  Willmore invariant}; as given, this quantity is invariant under
M\"obius transformations of the ambient $\mathbb{E}^3$. A key feature
is the linearity of its highest order term, $\bar{\Delta} H$.  This linearity
is important for PDE problems, but also means that the Willmore invariant
should be viewed as a fundamental curvature quantity.

In the 1992 article~\cite{ACF}, Andersson, Chrusciel and Friedrich (ACF)
(building on the works~\cite{AMO,AMO1,AMO2})
identified a conformal surface invariant that obstructs smooth
boundary asymptotics for a Yamabe solution on a conformally compact
3-manifold (and gave some information on the obstructions in dimension $n+1=d>3$). 
It is straightforward to show that this invariant is the
same as that arising from the variation of the Willmore energy; in
particular its specialisation to surfaces in $\mathbb{E}^3$ agrees
with (\ref{Wore}).  We show how tools from
conformal geometry can be used to describe and compute the asymptotics
of the Yamabe problem on a conformally compact manifold. This reveals
higher order hypersurface conformal invariants  that generalise the curvature
obstruction found by~ACF. In particular, for hypersurfaces of arbitrary
even dimension this yields higher order conformally invariant
analogues of the usual Willmore equation on surfaces in 3-space. The
construction also leads to a general theory for constructing and
treating conformal hypersurface invariants
along the lines of holography and  
the  Fefferman-Graham programme for constructing
invariants of a conformal structure via their Poincar\'e-Einstein and
``ambient'' metrics~\cite{FGrnew}. In this announcement, we focus only on main results,
a detailed account of this general theory will be presented elsewhere~\cite{GWnew}.

\section{The problem} \label{prob}

Given a Riemannian $d$-manifold $(M,g)$ with boundary
$\Sigma:=\partial M$, one may ask whether there is a smooth real-valued
function $u$ on $M$ satisfying the following two conditions: 
\begin{enumerate}
\item $u $ is a defining function for $\Sigma$ ({\it i.e.}, $\Sigma$
  is the zero set of $u$, and $\boldsymbol{d} u_x\neq 0$ $\forall x\in \Sigma$);
\item $\bar{g}:=u^{-2}g$ has scalar curvature
${\rm Sc}^{\bar{g}}=-d(d-1)$. 
\end{enumerate}
Here $\boldsymbol{d}$ is the exterior derivative. We assume $d\geq 3$
and all structures are $C^\infty$.

Assuming $u>0$ and setting $u=\rho^{-2/(d-2)}$, part (2) of this
problem gives the Yamabe equation.  The problem fits nicely into the
framework of conformal geometry: Recall that a conformal structure $\cc$
on a manifold is an equivalence class of metrics where the equivalence
relation $\widehat{g}\sim g$ means that $\widehat{g}= \Omega^2 g$ for
some positive function $\Omega$. The line bundle $(\Lambda^d TM)^2$ is
oriented and for $w\in \mathbb{R}$ the bundle of {\it conformal
  densities} of weight~$w$, denoted $\ce[w]$, is defined to be the
oriented $\frac{w}{2d}$-root of this (we use the same notation for
bundles as for their smooth section spaces). Locally each $g\in \cc$
determines a volume form and, squaring this, globally a section of
$(\Lambda^d T^*M)^2$. So, on a conformal manifold $(M,\cc)$ there is a
canonical section $\bg$ of $S^2T^* M\otimes \ce[2]$ called the
conformal metric. Thus each metric $g\in c$ is naturally in $1:1$
correspondence with a (strictly) positive section $\tau$ of $\ce[1]$
via $g=\tau^{-2} \bg$. Also, the Levi-Civita connection $\nabla$ of
$g$ preserves $\tau$, and hence~$\bg$.  Thus we are led to the
conformally invariant equation on a weight 1 density $\sigma\in
\ce[1]$
\begin{equation}\label{Ytwo}
S(\sigma):= 
\big(\nabla \si \big)^2 - \frac{2}{d} \si\, \Big(\Delta +\frac{\rm Sc}{2(d-1)}\Big) \si = 1 , 
\end{equation}
where $\bg$ and its inverse are used to raise and lower indices, 
$\Delta = \bg^{ab}\nabla_a\nabla_b$
and  ${\rm Sc}$ means $\bg^{bd}R_{ab}{}^a{}_d$, with $R$ the
Riemann tensor. Choosing $\cc\ni g=\tau^{-2} \bg$, equation~(\ref{Ytwo}) becomes 
exactly the PDE obeyed by the smooth function $u=\si/\tau$ solving part (2) of the problem above. 
%Now we want to study our problem using (\ref{Ytwo}), since 
%this equation is conformally invariant and has a special geometric
%interpretation that we wish to exploit. \edz{coming below.}  Equations
%(\ref{Yone}) and (\ref{Ytwo}) are related by setting $\si=u \tau$
%where $g=\tau^{-2} \bg$. 
Since $u$ is a defining function this means
$\si$ is a {\em defining density} for $\Sigma$, meaning that it is a
section of $\ce[1]$, its zero locus $\Z(\sigma)=\Sigma$, and $\nabla \si_x\neq 0$ $\forall x\in \Sigma$.  For our purpose we only
need to treat the problem formally (so it applies to any
hypersurface):

\begin{problem}\label{Riemannsfirststep} Let $\Sigma$ be an embedded
hypersurface in a conformal manifold $(M,\cc)$ with $d\geq 3$.
Given a defining density $\si$ for $\Sigma$, find a new, smooth, defining density $\bar \sigma$ such that
\begin{equation}\label{ind}
S(\bar{\si})=1 + \bar{\sigma}^{\ell} A_\ell\, ,
\end{equation}
for some $A_\ell\in \ce[-\ell]$,
where $\ell \in\mathbb{N}\cup\infty$ is as high as possible.
\end{problem}

\section{The main results} \label{results}

Here we use the notation $\O(\sigma^{\ell})$ to mean plus
$\sigma^{\ell} A$ for some smooth $A\in \ce[-\ell]$.
\begin{theorem}
\label{obstr}
Let $\Sigma$ be an oriented embedded hypersurface in 
$(M,\cc)$, where   $d\geq 3$, then: \\
$\bullet$ There
is a distinguished defining density $\bar\sigma\in \ce[1]$ for
$\Sigma$, unique to
$\O(\bar\sigma^{d+1})$, such that
\begin{equation}\label{ddens}
S({\bar \sigma})=1+\bar\sigma^d B_{\bar \sigma}\, ,
\end{equation}
where $B_{\bar \sigma} \in \ce[-d]$ is smooth on $M$. 
Given any defining density $\sigma$, then $\bar \sigma$ depends smoothly
on $(M,\cc,\sigma)$ via a canonical formula~$\bar\sigma(\sigma)$.
\\
$\bullet$ $\B:=B_{\bar\sigma(\sigma)}\big|_\Sigma$ is independent of $\sigma$
and is a natural invariant determined by $(M,\cc,\Sigma)$.
\end{theorem}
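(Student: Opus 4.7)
My plan is to build $\bar\si$ from $\si$ by an order-by-order formal expansion, reducing everything to a single linearisation identity for the operator $S$ of~(\ref{Ytwo}): for $\epsilon=\si^\ell g$ with $g\in\ce[1-\ell]$ and $\ell\geq 2$, expanding $(\nabla(\si+\epsilon))^2$ and $(\si+\epsilon)\big(\Delta+\tfrac{{\rm Sc}}{2(d-1)}\big)(\si+\epsilon)$ and collecting the pointwise $\si^{\ell-1}$ contributions yields
\[
S(\si+\epsilon)-S(\si)\;=\;\tfrac{2\ell(d-\ell+1)}{d}\,|\nabla\si|^2\,g\,\si^{\ell-1}\;+\;\O(\si^\ell).
\]
The coefficient combines a $+2\ell g|\nabla\si|^2$ term from $2(\nabla\si)\!\cdot\!(\nabla\epsilon)$ with a $-\tfrac{2}{d}\ell(\ell-1)g|\nabla\si|^2$ term from $-\tfrac{2}{d}\si\,\Delta(\si^\ell g)$; for $\ell\geq 2$ the curvature contribution and every remaining piece, including the quadratic $(\nabla\epsilon)^2$, is strictly higher order in $\si$.

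Granting this identity, existence follows by an induction of length $d$. After a preliminary rescaling $\si\mapsto h\si$ by a positive $h\in\ce[0]$ arranging $|\nabla\si|^2=1$ on $\Sigma$ (so that $S(\si_1)=1+\si_1 A_1$ for smooth $A_1$), I would inductively assume $S(\si_{k-1})=1+\si_{k-1}^{k-1}A$ for smooth $A$ and $k\in\{2,\ldots,d\}$, and set $\si_k=\si_{k-1}-\tfrac{d\,A}{2k(d-k+1)|\nabla\si_{k-1}|^2}\si_{k-1}^k$. This is well-defined because $|\nabla\si|^2$ is nonvanishing in a neighbourhood of $\Sigma$ and $k(d-k+1)\neq 0$ for $k\leq d$; iterating to $k=d$ produces $\bar\si$ satisfying~(\ref{ddens}). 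Every step is a polynomial expression in $\bg$, $\si$, $\nabla$, and ${\rm Sc}$, so the outcome is a canonical formula $\bar\si(\si)$ depending smoothly and naturally on $(M,\cc,\si)$.

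Uniqueness to $\O(\bar\si^{d+1})$ comes from the very same identity at $\ell=d+1$, where the coefficient $\ell(d-\ell+1)$ vanishes: adding any $\bar\si^{d+1}h$ shifts $S(\bar\si)$ only at orders $\bar\si^{d+1}$ and higher, and the inductive step that would determine $h$ is no longer invertible. For the independence claim, given two defining densities $\si,\si'$, uniqueness forces $\bar\si(\si')-\bar\si(\si)=\O(\bar\si^{d+1})$; expanding $\bar\si^{d}B_{\bar\si}=S(\bar\si)-1$ under this shift then gives $B_{\bar\si(\si')}-B_{\bar\si(\si)}=\O(\bar\si)$, which restricts to zero on $\Sigma$. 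Naturality of $\B$ is automatic, since each step of the construction is invariantly built from the conformal data $(M,\cc,\Sigma)$ and is tensorial in $\si$.

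The main obstacle will be verifying the linearisation identity cleanly---in particular, checking that the curvature contribution $\tfrac{{\rm Sc}}{2(d-1)}\si^2$, together with the mixed $\si^\ell\nabla\si\!\cdot\!\nabla g$ and $\si^\ell(\Delta\si)g$ pieces arising from $\Delta(\si^\ell g)$, really do contribute only at strictly higher order than $\si^{\ell-1}$, so that the coefficient $\ell(d-\ell+1)/d$ is exact and not merely leading modulo spurious cross-terms. Once this is in place, the vanishing of that coefficient at precisely $\ell=d+1$ simultaneously drives both the existence of the canonical $\bar\si$ and its uniqueness to the stated order, and the remaining assertions reduce to routine bookkeeping.
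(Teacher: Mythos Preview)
Your argument is correct and shares the paper's inductive architecture---improve a defining density order by order until the coefficient controlling the next correction vanishes---but you obtain the key coefficient $\tfrac{2\ell(d-\ell+1)}{d}$ by a bare-hands expansion of the local formula~(\ref{Ytwo}), whereas the paper recasts the same step in tractor language: one writes $S(\sigma)=I_\sigma^2$ with $I_\sigma=\hD\sigma$, expands $(\hD\sigma')^2=I_\sigma^2+\tfrac{2}{d}I_\sigma.D(\sigma^{k+1}f_k)+[\hD(\sigma^{k+1}f_k)]^2$, and then exploits that $\sigma$ and $\tfrac{1}{I_\sigma^2}I_\sigma.D$ generate an $\mathfrak{sl}(2)$, so that the coefficient drops out of a ${\mathcal U}(\mathfrak{sl}(2))$ identity, giving $f_k=-dA_k/\big(2(d-k)(k+1)\big)$. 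This matches your $g_k$ once one uses $|\nabla\sigma_{k-1}|^2=1+\O(\sigma)$, which holds after your preliminary normalisation. Your route is more elementary and makes the numerology completely explicit; the paper's buys manifest conformal invariance at every step (your factor $|\nabla\sigma|^{-2}$ is not itself invariant, though the discrepancy is $\O(\sigma)$ and so irrelevant to the induction) and feeds directly into the tractor machinery underlying the later holographic formula. One small tightening: ``polynomial expression'' is slightly loose, since you divide by $|\nabla\sigma|^2$ and extract $A_{k-1}$ by dividing $S(\sigma_{k-1})-1$ by $\sigma_{k-1}^{k-1}$; replacing $|\nabla\sigma|^{-2}$ by $1$ after the first step is harmless to the required order and yields exactly the paper's invariant $f_k$.
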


For any {\it unit conformal defining density} $\bar\si$ satisfying Eq.~(\ref{ddens}) of the
Theorem, it is straightforward, although tedious, to calculate
$\cB$. 
For $d=3$ we obtain 
\begin{equation}\label{ourWilly}
 \B=  2 \big(\bar \nabla_{(i} \bar \nabla_{j)\circ} + H\,  \IIo_{ij} +R^\top_{(ij)\circ}\big) \IIo^{ij}\, ,
 \end{equation}
where $\IIo_{ij}$ is the trace-free part of the second fundamental
form $\II_{ij}$,   $R^\top_{(ij)\circ}$ is the trace-free part of the projection of the ambient Ricci tensor 
along~$\Sigma$, and $\bar\nabla$ is the Levi-Civita for the
metric on $\Sigma$ induced by $g$.
%The map
%$$
%L_{ij}:= \bar \nabla_{(i} \bar \nabla_{j)\circ} + H\,  \II^\circ_{ij} +R^\top_{(ij)\circ}: S^2_0T\Sigma [-3]\to \ce_\Sigma [-3]\, ,
%$$ is {\emph conformally invariant}, and the formula for $\B$ is given in terms of some background metric $g\inc$.    
Equation~(\ref{ourWilly})
agrees with~\cite[Theorem 1.3]{ACF} and \cite{G+Yuri} and, by using the Gau\ss--Codazzi
equations,  agrees with~(\ref{Wore}) for
$\Sigma$ in $\mathbb{E}^3$. (We note that Eq.~(\ref{ddens}) is consistent with~\cite[Lemma 2.1]{ACF}.) 

For~$d=4$ and (specializing to) conformally flat structures~$\cc$, evaluated on~$g \in \cc$ with~$g$ flat,
our result for the {\em obstruction density} $\cB$ of Theorem
\ref{obstr} is
\begin{equation}\label{4willy}
{\mathcal B}=\frac1{6}\Big((\nablab_k\IIo_{ij})\, (\nablab^k\IIo^{ij})+
2\IIo^{ij}\bar\Delta\IIo_{ij}+\frac32\, (\nablab^k\IIo_{ik})\,  (\nablab_l\IIo^{il})
-2\bar\J\, \IIo_{ij}\IIo^{ij}  
+(\IIo_{ij}\IIo^{ij})^2
\Big)\, .
\end{equation}

For $d\geq 5$ odd, we prove that the obstruction density~$\cB$ has a linear highest order term, namely
$\bar\Delta^{(d-1)/2} H$ (up to multiplication by a non-zero
constant). So: $\cB$ is an analogue of the Willmore invariant; it can
be viewed as a fundamental conformal curvature invariant for
hypersurfaces; as an obstruction it is an analogue of the
Fefferman--Graham obstruction tensor~\cite{FGrnew}.  We see this as
follows.

From the algorithm for calculating $\cB$ it is easily concluded that
it is a natural invariant (in terms of a background metric), indeed it
is given by a formula polynomially involving the second fundamental
form and its tangential (to $\Sigma$) covariant derivatives, as well
as the curvature of the ambient manifold $M$ and its covariant
derivatives.  To calculate the leading term we linearise this formula
by computing the infinitesimal variation of~$\cB$. It suffices to
consider an $\mathbb{R}$-parametrised family of embeddings of
$\mathbb{R}^{d-1}$ in $\mathbb{E}^{d}$, with corresponding defining
densities $\si_t$ and such that the zero locus $\Z(\sigma_0)$ is the
$x^{d}=0$ hyperplane (where $x^i$ are the standard coordinates on
$\mathbb{E}^{d}=\mathbb{R}^{d}$) so that $\cB|_{t=0}=0$. Then
applying $\frac{\partial}{\partial t}\mid _{t=0}$ (denoted by a dot)
we obtain the following:
\begin{proposition}\label{Bnature}
The variation of the obstruction density is given by
$$
\dot \cB = \left\{
\begin{array}{ll}
 a\cdot \bar{\Delta}^{(d+1)/2} \dot{\si} + \mbox{\em lower order terms}\, ,&d-1 \mbox{ even, with $a\neq 0$ a constant, }\\[2mm]
\mbox{\em non-linear terms}\, ,& d-1 \mbox{ odd.}
\end{array}\right.
$$
\end{proposition}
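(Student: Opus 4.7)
The plan is to exploit that, by the construction underlying Theorem~\ref{obstr}, $\cB$ is a natural scalar conformal density of weight $-d$ expressed polynomially in $\IIo_{ij}$, $H$, the ambient Riemann tensor, and their tangential covariant derivatives. Specialising to flat $\mathbb{E}^d$ with $\Z(\sigma_0)$ the hyperplane $\{x^d=0\}$ kills the ambient curvature and forces $\IIo_0=0=H_0$, so every monomial in $\cB$ of total order $\geq 2$ in $(\IIo,H)$ contributes zero to $\dot\cB|_{t=0}$. Viewing $\Sigma_t$ as a graph over the hyperplane, the first variation is $\dot\II_{ij}=c_1\bar\nabla_i\bar\nabla_j\dot\sigma|_\Sigma$ and $\dot H=\tfrac{c_1}{d-1}\bar\Delta\dot\sigma|_\Sigma$ for a non-zero universal constant $c_1$; hence $\dot\cB$ equals the linearisation at zero of the \emph{linear-in-$(\IIo,H)$} part of $\cB$.

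Next I would identify that linear part via conformal weight counting. Since $\cB\in\ce[-d]$, $\IIo\in\ce[1]$, $\bar\nabla$ preserves weight, and each inverse metric lowers weight by $2$, a fully contracted linear monomial $\bar\nabla^{(k)}\IIo$ forces $k=d-1$, with $k$ even. This is possible only when $d-1$ is even, yielding the dichotomy at once: when $d-1$ is odd, $\cB$ admits no linear-in-$(\IIo,H)$ term of the required weight, so $\dot\cB|_{t=0}$ has no linear-in-$\dot\sigma$ contribution, and only genuinely non-linear (in $\IIo$) expressions would survive. In the remaining case $d-1$ even, on the flat hyperplane in flat ambient the intrinsic derivatives commute and the Codazzi identity gives $\bar\nabla^i\IIo_{ij}=(d-2)\bar\nabla_j H$; together with $\bar g^{ij}\IIo_{ij}=0$, every fully contracted linear monomial collapses to a non-zero scalar multiple of $\bar\Delta^{(d-1)/2}H$. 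The linear-in-$(\IIo,H)$ part of $\cB$ is therefore of the universal form $\alpha\,\bar\Delta^{(d-1)/2}H$ for some constant $\alpha$, so that
\[
\dot\cB \;=\; \tfrac{c_1\alpha}{d-1}\,\bar\Delta^{(d+1)/2}\dot\sigma + (\text{lower order})\,.
\]

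The remaining and main obstacle is to show $\alpha\neq 0$. I would establish this by linearising the recursive construction of $\bar\sigma$ from Theorem~\ref{obstr}: one solves $S(\bar\sigma)=1$ order-by-order in $\bar\sigma$, and the order-$d$ obstruction defines $\cB$. Linearising this recursion about the exact background $\bar\sigma=x^d$ in flat $\mathbb{E}^d$ realises the obstruction as a linear operator on the perturbation $\dot\sigma$, whose principal symbol is computed inductively from the leading symbol of $S$; the iterated Laplacian $\bar\Delta^{(d+1)/2}$ is generated by composing the order-$2$ symbol of $S$ through the recursive steps and is not cancelled at any subleading stage. Anchoring the induction at the $d=3$ case of~\cite{ACF}, where the principal term $\bar\Delta H$ appears explicitly and non-trivially, then yields $\alpha\neq 0$ for all odd $d\geq 3$. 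This principal-symbol tracking through the recursion—verifying that the iterated Laplacian is never annihilated—is the technical heart of the argument.
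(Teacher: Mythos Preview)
Your weight-and-parity argument for the dichotomy, and for the form $\alpha\,\bar{\Delta}^{(d-1)/2}H$ of any linear part, is correct and is a genuinely different route from the paper's. The paper instead differentiates the defining relation $I_{\bar{\sigma}_t}^2 = 1 + \bar{\sigma}_t^{\,d}\, \cB_{\bar{\sigma}_t}$ directly in $t$: since $\cB_{\bar{\sigma}_0}=0$ on the flat hyperplane, this yields that $\dot{\bar{\sigma}}$ solves a linear $I.D$ boundary problem to $\O(\bar{\sigma}^d)$ whose order-$d$ obstruction along $\Sigma$ is $\dot{\cB}$, and then cites \cite[Theorem~4.5]{GW} on such problems to read off the leading term and its non-vanishing in one stroke. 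Your approach trades the tractor machinery for an elementary conformal-weight count that makes the even/odd split completely transparent; the paper's approach packages the entire computation, including the non-zero constant, into a single external result.

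The genuine gap in your proposal is the argument that $\alpha\neq 0$. Linearising the recursion of Lemma~\ref{Isquare} is exactly the right move---it is what the paper does---but your proposed ``anchoring the induction at the $d=3$ case of~\cite{ACF}'' does not supply what is needed. The $d=3$ computation concerns a surface in a $3$-manifold and carries no information about the recursion for any other fixed $d$; there is no induction on $d$ here, only a recursion in the order $\ell=1,\ldots,d$ for each fixed $d$. What must actually be shown is that when one solves the linearised equation (schematically $2\,\partial_n\dot{\bar{\sigma}} - \tfrac{2}{d}\,\bar{\sigma}\,\Delta\dot{\bar{\sigma}} = \bar{\sigma}^{\,d}\dot{\cB}$ about the flat background) formally in powers of $\bar{\sigma}$, the coefficient obstructing at order $d$ is a \emph{non-zero} multiple of $\bar{\Delta}^{(d+1)/2}\dot{\sigma}\big|_\Sigma$ for $d$ odd and vanishes for $d$ even. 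That is precisely the content of \cite[Theorem~4.5]{GW}; absent that, you would have to carry out the explicit order-by-order expansion and verify non-vanishing of the resulting product of rational constants. Your assertion that the iterated Laplacian ``is not cancelled at any subleading stage'' is exactly the claim to be proved, and the $d=3$ check does not substitute for it.
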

This establishes the result, as in this setting the highest order term
in the variation of mean curvature is $\frac{1}{d-1}\bar{\Delta} \dot
\si$.  It also shows that when $n$ is odd the general formula for
$\cB$ may be expressed so that it has no linear term.

Employing methods from tractor calculus~\cite{BEG}, and the notion of a holographic formula as introduced in~\cite{GW}
a simple closed formula for the obstruction density in any dimension can be obtained.
Key ingredients of this result are the tractor bundle associated to a conformal structure and the Thomas $D$-operator $D^A$. Also needed is the projector
$\Sigma^A_B:=\delta^A_B -N^A N_B$ 
 from the tractor bundle  along $\Sigma$ to the normal bundle of the normal tractor $N^A$. The latter is isomorphic to the tractor bundle of the hypersurface~$\Sigma$~\cite{Goal} and $\bar D_A$ is its intrinsic Thomas-$D$ operator. For an explanation of these details see Section~\ref{proofs} below as well as~\cite{BEG,GWnew}. In these terms our result is as follows:

\begin{theorem}
Let $\bar \sigma$ be a unit conformal defining density. Then, the ASC obstruction density $\B$
is given by the holographic formula
$$
(-1)^{n+1}\, \frac{n!(n+2)!}{4} \,  \B= \bar D_A \Big[\Sigma^A_B\Big((\bar I.D)^n \bar I^B - (\bar I.D)^{n-1}[X^B K]\Big)\Big]\Big|_\Sigma\, ,
$$
where $K:=P_{AB}P^{AB}$, $P^{AB}:=\hD^A\bar I^B$ and $\bar I^A=\hD^A \bar \sigma$.
\end{theorem}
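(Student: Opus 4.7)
The plan is to reinterpret the obstruction density $\B$ as a ``top-order curvature'' of the scale tractor $\bar I^A = \hD^A \bar\sigma$, and then to extract that curvature via a tractor-valued iteration of the Laplace-type operator $\bar I.D$, finally projecting the result onto the intrinsic tractor bundle of $\Sigma$ by $\Sigma^A_B$ and $\bar D_A$.

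I would begin by unpacking the tractor content of the defining condition $S(\bar\sigma) = 1 + \bar\sigma^d B_{\bar\sigma}$ of Theorem~\ref{obstr}. Since $\bar I_A \bar I^A$ is essentially $S(\bar\sigma)$ (up to a universal constant), the fact that $\bar\sigma$ is a \emph{unit} conformal defining density mod $\O(\bar\sigma^{d+1})$ makes $\bar I^A$ nearly parallel along directions transverse to $\Sigma$, and $\B$ is precisely the leading coefficient at which parallelism fails at order $\bar\sigma^d$. The strategy is then to design a tractor expression of the correct conformal weight $-d$ on $\Sigma$ that isolates this coefficient.

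Next, I would exploit the operator $\bar I.D = \bar I^A D_A$, which acts on tractors of weight $w$ to produce tractors of weight $w-1$ and encodes a transverse differentiation along~$\bar I^A$. Iterating on $\bar I^B$, and using the fundamental tractor commutator relations together with $P^{AB} = \hD^A \bar I^B$ and the curvatures of the normal tractor, one expects an identity of the schematic form
\[
(\bar I.D)^k \bar I^B \;=\; (\mbox{tangential tractor, weight } -k) \;+\; X^B \, (\mbox{scalar, weight } -k-1)\, .
\]
The key claim is that at the top order $k=n$ the $X^B$-component is exactly the one produced by $(\bar I.D)^{n-1}(X^B K)$, with $K := P_{AB} P^{AB}$ playing the role of a tractor Ricci scalar; the subtraction in the theorem is then precisely the renormalisation that removes this obstruction-irrelevant piece and leaves a genuinely tangential tractor along $\Sigma$.

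The last step is the descent to the intrinsic geometry of $\Sigma$. The projector $\Sigma^A_B$ sends the restricted tangential tractor into the intrinsic tractor bundle of $\Sigma$~\cite{Goal}, on which $\bar D_A$ acts as the Thomas-$D$ of $(\Sigma, \bar\cc)$; applying it and restricting to $\Sigma$ therefore yields a natural scalar density of weight $-d$ on $\Sigma$. By the uniqueness part of Theorem~\ref{obstr} this must be a universal multiple of $\B$, and the specific constant $(-1)^{n+1} n!(n+2)!/4$ I would fix by specialising to the flat-hyperplane linearisation of Proposition~\ref{Bnature}, where both sides collapse to a multiple of $\bar\Delta^{(d-1)/2}H$ whose coefficients can be read off directly. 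The main obstacle will be the combinatorial bookkeeping of the trace and $X^B$-type terms generated by iterated $\bar I.D$ actions: every step of the recursion injects curvature corrections, and verifying that the single counterterm $(\bar I.D)^{n-1}(X^B K)$ cancels \emph{exactly} all spurious pieces---without itself introducing any tangential obstruction---is the delicate identity on which the whole holographic formula rests.
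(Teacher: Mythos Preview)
The paper does not actually prove this theorem. It is an announcement: the statement is given, followed only by a remark on the role of $(\bar I.D)^n$ as a GJMS-type holographic operator, and the detailed argument is deferred to~\cite{GWnew}. So there is no paper proof to compare your proposal against.

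That said, your outline has a genuine logical gap that would need repair in any full proof. You argue that after applying $\bar D_A \Sigma^A_B$ and restricting to~$\Sigma$ you obtain ``a natural scalar density of weight $-d$ on $\Sigma$'', and then assert that ``by the uniqueness part of Theorem~\ref{obstr} this must be a universal multiple of $\B$''. That inference is not valid: Theorem~\ref{obstr} says only that $\bar\sigma$ is unique modulo $\O(\bar\sigma^{d+1})$ and that $\B$ is independent of the initial defining density. It does \emph{not} say that $\B$ spans the space of natural conformal hypersurface invariants of weight $-d$; already in low dimensions there are many such invariants (contractions of powers of $\IIo$, of $W^\top$, and so on). So ``naturality plus correct weight'' cannot force proportionality to $\B$, and the linearisation check you propose would only fix a constant that you have not yet shown exists.

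What is actually needed is a direct identification: one must show that the tractor expression on the right genuinely extracts the order-$\bar\sigma^d$ coefficient $B_{\bar\sigma}$ of $S(\bar\sigma)-1$, not merely \emph{some} weight $-d$ invariant. Your earlier paragraphs gesture at this (``$\B$ is precisely the leading coefficient at which parallelism fails''), but then the argument drifts to the weaker uniqueness claim. The honest route is to track, through the $\mathfrak{sl}(2)$ algebra generated by $\sigma$ and $I.D$ as in Lemma~\ref{Isquare}, exactly how repeated application of $\bar I.D$ to $\bar I^B$ produces $B_{\bar\sigma}$ in the $X^B$-slot at top order, and to verify by the same algebra that the $K$-counterterm removes the lower-order curvature debris. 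You correctly flag this as the crux; the point is that it is not a side issue to be cleaned up after a soft uniqueness argument, it \emph{is} the proof.
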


\begin{remark}
In the above Theorem, 
the operator $(\bar I.D)^n$ is an example of a sequence of holographic formul\ae\  for tractor twistings of the conformally invariant GJMS operators of~\cite{GJMS}. These are a sequence of conformally invariant operators built from powers of the Laplacian with subleading curvature corrections; the simplest examples of these are the Yamabe operator (or conformally invariant wave operator) and Paneitz operator.
\end{remark}

The $d=3$ invariant~(\ref{ourWilly}) is the variation of the Willmore energy $E=\int_\Sigma \IIo_{ij}\IIo^{ij}$, while in $d=4$, for $\cc$ conformally flat, 
it can be shown that the invariant~(\ref{4willy}) is the variation of $E=\int_\Sigma \IIo_{ij}\IIo^{jk}\IIo_{k}{}^i$. The na\"ive conjecture that powers of traces of the trace-free second fundamental form yield integrands for  $d>4$ energy functionals
is bound to fail $d$ odd because of the leading behavior of the obstruction density given in Proposition~\ref{Bnature} (see~\cite{Guven,G+Yuri} for a study of conformally invariant $d=4$ bending energies). 
It seems likely that the higher dimensional obstruction densities are variational, therefore it is interesting to speculate whether they are the same as or closely linked to  the variations of the submanifold conformal anomalies studied in~\cite{GrW}. A related question is their relevance for entanglement entropy. Recently in~\cite{Ast} variations of the  Willmore energy were employed in a study of
surfaces maximizing entanglement entropies. 

\subsection{A holographic approach to submanifold invariants}\label{holo}
Given a conformal manifold $(M,c)$ and a section $\si$ of $\ce[1]$ one
may construct density-valued conformal invariants that couple the data
of the jets of the conformal structure with the jets of the section
$\si$.  In the setting of Theorem \ref{obstr}, consider such an
invariant $U$ (say), which uses the section $\bar\si$ of the
Theorem. Suppose that at every point, $U$ involves~$\bar\si$
non-trivially, but uses no more than its $d$-jet of $\bar\si$. Then it
follows from the first part of Theorem \ref{obstr} that $U|_\Sigma$ is
determined by $(M,c,\Sigma)$ and so is a conformal invariant of
$\Sigma$. On the interior the formula for $U$ as calculated in the
scale $\bar\si$ (so using $\bar\si$ to trivialize the density bundles)
is then a regular Riemannian invariant of $(M,\bar{g})$ (where
$\bar{g}={\bar\si}^{-2}\bg$) which corresponds holographically to the
submanifold invariant $U|_\Sigma$.  
%E.g.\ the trace-free part of the
%Ricci tensor $R^{\bar{g}}_{(ab)_\circ}$ corresponds to ($(d-2)\times$) 
%$\IIo_{ab}$ this way through the conformally invariant tensor
%trace-free$((d-2)\nabla_a\nabla_b \bar{\si}+ R_{ab}\bar{\si})$ cf.\ \cite[Proposition 8]{CGproj}.

\section{The ideas behind the main proofs}\label{proofs}
On a conformal manifold $(M,c)$, although there is no canonical
connection on $TM$, there is a canonical linear connection
$\nabla^{\ct}$ on a rank $d+2$ vector bundle known as the tractor
bundle and denoted $\ce^A$ in an abstract index notation.  A choice of
metric $g\in c$ determines an isomorphism $ \ce^A \stackrel{g}{\cong}
\ce[1]\oplus T^*\!M[1]\oplus \ce[-1] $.  This connection preserves a
metric $h_{AB}$ on $\ce^A$ that we may therefore use to raise and
lower tractor indices. For $V^A=(\si, \mu^a,
\rho)$ and $W^A=(\tau, \nu^a,\kappa)$ this is given by
$h(V,W)=h_{AB}V^A W^B=\si \kappa +\bg_{ab}\mu^a\nu^b+\rho\tau=:V.W$.
Closely linked to $\nabla^\ct$ is an important, second order
conformally invariant operator $D^A:\ce [w]\to \ce^A[w-1]$; when
$w\neq 1-\frac d2$, we denote $\frac{1}{d-2w-2}$ times this by $\hD$,
where $ \hD^A\si\stackrel{g}{=} (\si,~ \nabla_a
\si,~-\frac{1}{d}(\Delta +\J)\si )$, for the case $\si\in \ce[1]$, and
$2\J={\rm Sc}^g/(d-1)$. For $\si$ a scale, or even a defining density,
we shall write $I^A_\si:=\hD^A \si$, which we call the {\em scale
  tractor}. Now $S(\si)$ from above is just
$S(\si)=I^2_\si:= h_{AB}I^A_\si I^B_\si$,
so the equation (\ref{Ytwo}) has the nice geometric interpretation
$I_\si^2=1$~\cite{Goal}, and this is critical for our treatment.
 
Note that it is essentially trivial to solve (\ref{ind}) for the case
$\ell=1$.  Theorem \ref{obstr} is then proved inductively via the
following Lemma. The Lemma also yields an algorithm for explicit
formulae for the expansion, that we cannot explain fully here, but
through this and related results the naturality of $\cB$ can be seen.
\begin{lemma}\label{Isquare}
Suppose $I^2_\si=S(\si)$ satisfies (\ref{ind}) for $\ell=k \geq 1$.
Then, if $k\neq d$, there exists $f_k\in \ce[-k]$ such that the scale
tractor $I_{\sigma'}$ of the new defining density
$\sigma':=\sigma+\sigma^{k+1} f_k$ satisfies (\ref{ind}) for
$\ell=k+1$. 
When $k=d$ and $\sigma':=\sigma+\sigma^{d+1} f$, then for any $f\in\ce[-d]$, 
$$
I_{\sigma'}^{\, 2}=I_\sigma^{\, 2}+\O(\sigma^{d+1})\, . 
$$
\end{lemma}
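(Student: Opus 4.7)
The plan is to exploit the identity $I_\sigma^{\,2} = S(\sigma)$ and compute $S(\sigma')$ directly for the Ansatz $\sigma' := \sigma + \sigma^{k+1} f$ with $f\in\ce[-k]$. Only the coefficient of $\sigma^k$ in $S(\sigma')-1$ matters for pushing (\ref{ind}) from $\ell=k$ to $\ell=k+1$; a direct calculation will show that this coefficient is linear in $f$, with an explicit scalar prefactor that vanishes precisely at $k=d$.

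Writing $L := \Delta + \tfrac{\rm Sc}{2(d-1)}$, so that $S(\sigma) = (\nabla\sigma)^2 - \tfrac{2}{d}\,\sigma L\sigma$, the gradient piece expands as
$$
(\nabla\sigma')^2 = (\nabla\sigma)^2 + 2(k+1)\,\sigma^k f\,(\nabla\sigma)^2 + \O(\sigma^{k+1}),
$$
while the Leibniz rule produces a single term in $L(\sigma^{k+1}f)$ that lives at order $\sigma^{k-1}$, namely $k(k+1)\sigma^{k-1} f (\nabla\sigma)^2$; the three remaining terms are $\O(\sigma^k)$ or smaller. After multiplying by $\sigma' = \sigma + \O(\sigma^{k+1})$, the only contribution to $\sigma' L\sigma'$ beyond $\sigma L\sigma$ at order $\sigma^k$ is $k(k+1)\sigma^k f(\nabla\sigma)^2$. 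Combining, one obtains
$$
S(\sigma') = S(\sigma) + \frac{2(k+1)(d-k)}{d}\,\sigma^k f\,(\nabla\sigma)^2 + \O(\sigma^{k+1}).
$$

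Since the hypothesis $S(\sigma) = 1 + \sigma^k A_k$ forces $(\nabla\sigma)^2|_\Sigma = 1$, the density $(\nabla\sigma)^2 \in \ce[0]$ is smoothly invertible in a tubular neighbourhood of $\Sigma$. For $k \neq d$, setting
$$
f_k := -\frac{d}{2(k+1)(d-k)}\,A_k \,\big((\nabla\sigma)^2\big)^{-1} \in \ce[-k]
$$
annihilates the order-$\sigma^k$ coefficient of $S(\sigma')-1$, yielding (\ref{ind}) with $\ell = k+1$. For $k = d$, the factor $(d-k)$ vanishes identically, so the whole $f$-dependent correction is pushed to $\O(\sigma^{d+1})$, regardless of the choice of $f\in\ce[-d]$; this is the second claim.

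The only delicate step is the bookkeeping in the four-term Leibniz expansion of $L(\sigma^{k+1}f)$: the surviving term at order $\sigma^{k-1}$ from $\Delta$, after multiplication by $\sigma'$, must be weighed against the gradient contribution $2(k+1)$, and the cancellation collapses everything to the single scalar prefactor $(d-k)$. This algebraic coincidence is what singles out $d$ as the obstruction degree at which $\bar\sigma$ ceases to be uniquely determinable, and where $\B$ is born.
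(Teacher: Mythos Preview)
Your argument is correct, but it proceeds along a different line from the paper's. You work directly with the explicit expression $S(\sigma)=(\nabla\sigma)^2-\tfrac{2}{d}\,\sigma L\sigma$ in a choice of scale, expand via the Leibniz rule, and read off the order-$\sigma^k$ correction. The paper instead packages everything tractorially: it writes $I_{\sigma'}^{\,2}=I_\sigma^{\,2}+\tfrac{2}{d}\,I_\sigma.D(\sigma^{k+1}f_k)+\big[\hD(\sigma^{k+1}f_k)\big]^2$, disposes of the square term via tractor identities, and then evaluates $I_\sigma.D(\sigma^{k+1}f_k)$ using the $\mathfrak{sl}(2)$ algebra generated by multiplication by $\sigma$ and the operator $\tfrac{1}{I_\sigma^2}I_\sigma.D$. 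Both routes land on the same scalar $(d-k)$ and hence the same obstruction point $k=d$. Your approach is more elementary and self-contained; the paper's is more structural, feeds directly into the tractor/holographic machinery used elsewhere, and yields the manifestly scale-independent formula $f_k=-\tfrac{d}{2(d-k)(k+1)}A_k$ without any $(\nabla\sigma)^{-2}$ factor. On that last point: your $f_k$ carries the factor $\big((\nabla\sigma)^2\big)^{-1}$, which is only defined near $\Sigma$ and is not conformally invariant. This is harmless, since the hypothesis gives $(\nabla\sigma)^2=1+\O(\sigma)$, so your $f_k$ differs from the paper's by $\O(\sigma)$, which only changes $\sigma'$ at $\O(\sigma^{k+2})$; but it would be cleaner to drop that factor and absorb the resulting $\O(\sigma)$ into the remainder.
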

\begin{proof}[Proof - Idea] First because of the scale tractor definition we have
$$ \big(\hD \sigma'\big)^2=I_\sigma^2 +\frac2d \, I_\sigma.D \big(\sigma^{k+1}
  f_k\big) + \Big[\hD \big(\sigma^{k+1} f_k\big)\Big]^2\, .
$$ Tractor calculus identities show that the last term is
  $\O(\si^{k+1})$, while $I^2_\si=1+\si^k A_k$. Crucially, the operators
  $\si$ (acting by multiplication) and $\frac{1}{I_\sigma^2}I_\si. D$
  generate an $\mathfrak{sl}(2)$ \cite{GW}. Using standard ${\mathcal
    U}\big(\mathfrak{sl}(2)\big)$ identities, we compute that $f_k:=
  -d\, A_k/(2(d-k)(k+1))$ which deals with the $k\neq d$ cases; the
  same computation gives the $k=d$ conclusion.
\end{proof}

\begin{proof}[Proof of Proposition \ref{Bnature} - sketch] 
The key idea is that for each $t$ we can replace $\si_t$ with the
corresponding normalised defining density $\bar{\si}_t$ which solves
%\begin{equation}\label{tsol}
$
I^2_{\bar{\si}_t}=1+{{\bar{\si}}_t}^{d}\mathcal{B}_{\bar{\si}_t}
$,
%\end{equation}
via Theorem \ref{obstr}, while maintaining smooth dependence on $t$.
Then it is easy to prove that
$\cB_{\bar{\si}_0}|_{\Z(\bar\si_{t=0})}=0$, while $\partial(I^2_{\bar{\si}_t})/\partial t|_{t=0}$ is  proportional to  $I. D \dot{\bar\si}$. So
applying $\frac{\partial}{\partial t}\mid _{t=0}$  implies that $\dot{\bar\si}$ solves a linear $I. D$ boundary problem  up to $\O (\bar{\si}^{d})$ with obstruction
$\dot{\cB}_{\bar{\si}}$. Using~\cite[Theorem
  4.5]{GW} we can easily deduce the conclusion.  
\end{proof}

\subsection{Acknowledgements} Prior to this work, ARG had discussions of 
this problem with 
Fernando Marques and then Pierre Albin and Rafe Mazzeo. We are
indebted for the insights so gained.

\end{document}